 \titleformat*{\subsection}{\bfseries\boldmath}
 \titleformat*{\section}{\large\bfseries\boldmath}
\newcommand\doi[2]        {\href{https://dx.doi.org/#1}{#2}}
\newtheorem{proposition}{Proposition}[section]
\theoremstyle{definition}
\newtheorem{definition}{Definition}[section]
\begin{document}

\thispagestyle{empty}
\phantom{.}
\vspace{.7cm}
\begin{center}{\Large \textbf{
Translation invariant defects
\\[.5em]
as an extension of topological symmetries}}
\end{center}

\vspace{.5em}

\begin{center}
\large
Federico Ambrosino$\,{}^{(1,2)}$,
Ingo Runkel$\,{}^{(1,3)}$ and  G\'erard M.\ T.\ Watts$\,{}^{(1,5)}$
\end{center}

\vspace{.5em}

\begin{center}{\small
\begin{tabular}{ll}
$^{(1,2)}$  & Deutsches Elektronen-Synchrotron DESY,
Notkestr.\ 85, 22607 Hamburg, Germany,\\
& and Perimeter Inst.\ for Theoretical Physics, Waterloo, Ontario N2L 2Y5, Canada
\\[.2em]
& {\sf federicoambrosino25@gmail.com}
\\[.5em]
$^{(1,3)}$ & Fachb.\ Mathematik, Universit\"at Hamburg,
Bundesstr.\ 55, 20146 Hamburg, Germany
\\[.2em]
& {\sf ingo.runkel@uni-hamburg.de}
\\[.5em]
$^{(1,5)}$ & Dept.\ of Mathematics,
	King’s College London, Strand, London WC2R 2LS, UK
\\[.2em]
& {\sf gerard.watts@kcl.ac.uk}
\end{tabular}
}
\end{center}

\vspace{1em}

\noindent
\textbf{Abstract:}
The modern way to understand symmetries of a quantum field theory is via its topological defects in various dimensions. In this contribution to the proceedings we focus on line defects in 2d\,QFT and we point out that topological defects naturally embed into a larger class, namely translation invariant defects. The latter still allow for non-singular fusion and one obtains a monoidal category of translation invariant defects which contains that of topological defects as a full subcategory. We give a simple perturbative description of translation invariant defects in a perturbed conformal field theory via chiral three-dimensional topological field theory. We show in the example of the Ising CFT and the Lee-Yang CFT that even if no topological defects survive the deformation, some translation invariant defects still do.

\vspace{1em}

\tableofcontents

\newpage

\noindent
A $d$-dimensional quantum field theory (QFT) has observables localised on submanifolds of all dimensions $0,1,\dots,d$. Of these, zero-dimensional observables (fields) and one-dimensional observables (e.g.\ Wilson lines in gauge theory) are perhaps the most familiar. In this contribution to the proceedings, we will focus on two-dimensional QFT and a special kind of one-dimensional observables, so-called translation invariant defects, as well as their properties and applications. These proceedings are mainly based on Ref.\ \citen{Ambrosino:2025myh}.

\medskip

Before discussing translation invariant defects in more detail, we will have a closer look at the special case of topological defects.

\section{Topological line defects in two-dimensional field theory}\label{sec:topdef}

A line defect in a two-dimensional quantum field theory is \emph{topological} if it can be deformed without affecting the value of correlators, provided it is not deformed across field insertions or other defect lines.

Topological defects can be composed by fusion of parallel line defects, and they form a tensor category whose objects are the different topological defect conditions, the tensor product is the fusion operation, and the morphisms are topological junction fields\cite{Fuchs:2002cm,Davydov:2011kb,Chang:2018iay}. This tensor category describes the \textit{topological symmetry} of the 2d\,QFT and generalizes the usual case of a symmetry group in two ways: firstly, topological defects need not be invertible, and secondly, the topological junctions are additional structure even in the group case (where they define a 3-cocycle in group cohomology that is an obstruction to gauging the symmetry\cite{Frohlich:2009gb,Bhardwaj:2017xup}).

\medskip

Non-conformal quantum field theories generically are part of a 1-parameter family under scale transformation (the renormalisation group (RG) flow), with limit points being conformal field theories at small scales (the ultraviolet (UV) CFT) and at large scales (the infrared (IR) CFT).
The IR CFT may be trivial and just consist of one or more vacuum states of a massive QFT.

\begin{figure}

\centering{\includegraphics[scale = 1]{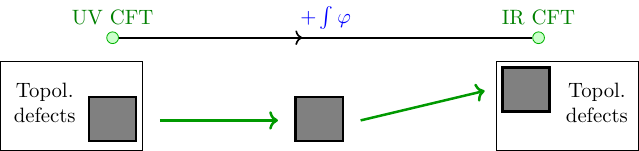}}

\caption{Topological defects along a renormalisation group flow. Only some of the topological defects (gray box) of the UV CFT are preserved under the perturbation by the bulk field $\varphi$. These flow to some of the topological defects of the IR CFT.}
\label{fig:top-def-RG}
\end{figure}

The UV and IR CFT will typically have a much larger topological symmetry than the 1-parameter family of QFTs connecting them. Intuitively, the topological symmetry of a CFT has to be large enough to accommodate that of all RG flows that can start or end at it. Thus we expect a picture like in Figure~\ref{fig:top-def-RG}.
In particular, defects in the UV CFT which remain topological along the RG flow are part of the topological symmetry of the IR CFT. Since in general it is a difficult problem to determine the RG endpoint starting from a UV CFT and a relevant perturbing field, any accessible information about the IR CFT is helpful.

\medskip

A very useful construction is to define a three-dimensional topological field theory in terms of the tensor category of topological defects via a state-sum construction (technically this requires one to work with a spherical fusion category of topological defects), and to couple this 3d\,TFT to the 2d\,QFT so that the QFT becomes a non-topological boundary condition of the 3d\,TFT, see Ref.\ \citen{Gaiotto:2020iye} and e.g.\ the review in Ref.\ \citen{saclay-proc}.
The original 2d\,QFT can be recovered by considering a surface times an interval with the non-topological boundary condition on one side, and a specific topological boundary condition on the other, see Figure~\ref{fig:three-bulk-fields}b for an illustration.
The same idea has proven very powerful also in higher-dimensional QFTs and in condensed-matter contexts, and the TFT is called \emph{symmetry topological field theory} (SymTFT) \cite{Kong:2020cie,Apruzzi:2021nmk,Freed:2022qnc,Kaidi:2022cpf,Bhardwaj:2023ayw}.

\begin{figure}

\centering{\includegraphics[scale = 0.9]{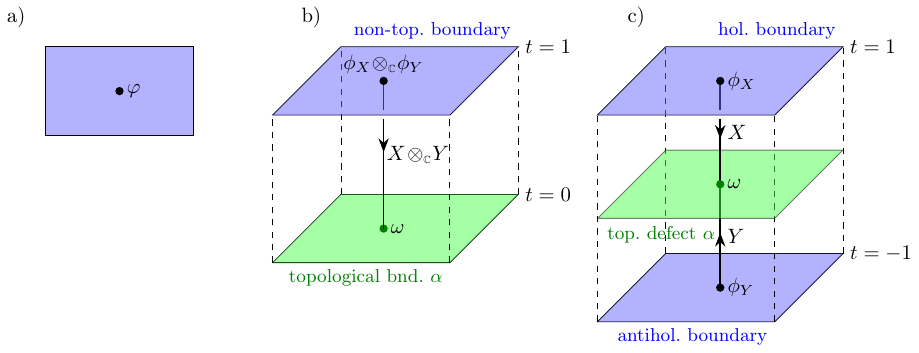}}

\caption{a) A patch of the CFT world sheet $\Sigma$ with an insertion of a bulk field $\varphi$. b) The same patch in the SymTFT representation on the three-manifold $\Sigma \times [0,1]$. c) Chiral TFT representation on the three-manifold $\Sigma \times [-1,1]$.}
\label{fig:three-bulk-fields}
\end{figure}

Instead of SymTFT, below we will use the so-called chiral TFT which is well-suited to 2d\,CFTs \cite{Fuchs:2002cm}. In Figure~\ref{fig:three-bulk-fields} we compare how to write a bulk field $\varphi$ of the 2d\,CFT in SymTFT and chiral TFT language.

In the SymTFT representation (Figure~\ref{fig:three-bulk-fields}b), the upper boundary is the non-topological boundary and the bottom boundary is topological. Suppose the bulk field transforms in the representation $X \otimes_\mathbb{C} Y$, where $X$ is a representation of the holomorphic chiral algebra and $Y$ of the anti-holomorphic one. Then the bulk field is described by a topological line defect labelled $X \otimes_\mathbb{C} Y$, an element $ \phi_X \otimes_\mathbb{C} \phi_Y \in X \otimes_\mathbb{C} Y$, and a topological junction $\omega$ of the line defect with the topological boundary.

In the chiral TFT description in Figure~\ref{fig:three-bulk-fields}c, both boundaries of the three-manifold are non-topological, with $\Sigma \times \{1\}$ carrying the holomorphic degrees of freedom, and $\Sigma \times \{-1\}$ the anti-holomorphic ones. At $\Sigma \times \{0\}$ a topological surface defect is placed which controls how holomorphic and anti-holomorphic fields are combined. The bulk field $\varphi$ is represented via elements $\phi_X \in X$, $\phi_Y \in Y$, a line defect labelled $X$ from the holomorphic boundary to the surface defect, a line defect $Y$ starting at the anti-holomorphic boundary, and a junction $\omega$ where $X$ and $Y$ meet on the surface defect $\alpha$.

One passes from the chiral TFT in Figure~\ref{fig:three-bulk-fields}c to the SymTFT in Figure~\ref{fig:three-bulk-fields}b by folding \cite{Carqueville:2023jhb}. Below we will exclusively work in the chiral TFT picture as it makes the separation into holomorphic and anti-holomorphic coordinate dependence manifest.

\medskip

Consider a 2d\,CFT and a topological line defect $D$ of that CFT. If we perturb the CFT by a bulk field $\varphi$, we can ask if $D$ remains topological also in the perturbed theory. The condition for this is simply that $D$ should commute with the perturbing field\cite{Chang:2018iay,Fredenhagen:2009tn}:
\begin{equation}\label{eq:topdef-preserved-ws}
\includegraphics[valign = c,scale=0.75]{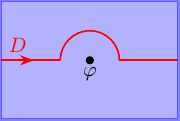} \quad =\quad  \includegraphics[valign = c, scale=0.75]{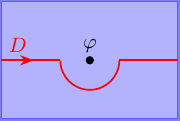}
\end{equation}
The figure shows a patch of the CFT world sheet $\Sigma$, and the identity is understood to hold inside correlators, provided the field and defect configuration is the same outside the patch shown.
In the chiral TFT presentation, this condition looks as follows:
\begin{equation}\label{eq:topdef-preserved-chiralTFT}
\includegraphics[valign = c,scale=0.75] {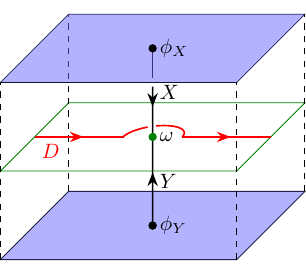} \quad = \quad \includegraphics[valign = c,scale=0.75] {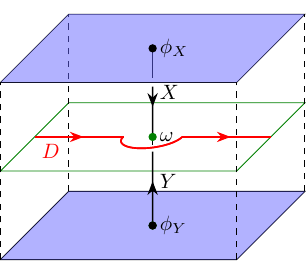}
\end{equation}
The description of the bulk field is as in Figure~\ref{fig:three-bulk-fields}c. The line defect $D$ is represented as a line defect on the topological surface defect $\alpha$.

We stress that the question whether the bulk field $\varphi$ commutes with the topological defect $D$ is now entirely phrased within the chiral TFT in terms of properties of its various topological defects.

\subsubsection*{Example: Ising model}

The 2d Ising CFT has central charge $c=\frac12$. In the chiral TFT picture, there are three elementary line defects labelled $\mathbf{1}$, $\sigma$, $\epsilon$, corresponding to the three irreducible representations $R_h$ of the irreducible Virasoro vertex operator algebra (VOA) at $c=\frac12$ with lowest $L_0$-weights $h_\mathbf{1}=0$, $h_\sigma=\frac1{16}$, and $h_\epsilon = \frac12$.

The only available surface defect turns out to be the trivial one, and so line defects on the surface defect are just the same as line defects of the chiral TFT. Accordingly, the elementary topological defects of the Ising CFT are labelled by $\mathbf{1}$, $\sigma$ and $\epsilon$, with fusion rule $\sigma \otimes \sigma = \mathbf{1} \oplus \epsilon$.

Two elementary line defects $X,Y$ have a non-trivial topological junction with the trival surface defect iff $X=Y$. Hence we find three primary bulk fields which we denote by bold symbols,
\begin{align}
\bm{1} &= |0\rangle \otimes_\mathbb{C} |0\rangle \in R_0 \otimes_\mathbb{C} R_0
\nonumber\\
\bm{\sigma} &= |\tfrac1{16}\rangle \otimes_\mathbb{C} |\tfrac1{16}\rangle \in R_{\frac1{16}} \otimes_\mathbb{C} R_{\frac1{16}}
\\
\bm{\epsilon} &= |\tfrac12\rangle \otimes_\mathbb{C} |\tfrac12\rangle \in R_{\frac12} \otimes_\mathbb{C} R_{\frac12} \ .
\nonumber
\end{align}
Here $|h\rangle \in R_h$ denotes the primary state (i.e.\ the of lowest $L_0$-weight).
The following table lists which elementary topological line defects commute with each of the bulk fields:

\begin{center}
\begin{tabular}{r|c|c}
perturbing bulk field & $\bm{\epsilon}$ & $\bm{\sigma}$
\\
\hline
conserved topological defects & $\mathbf{1}$, $\mathbf{\epsilon}$
& $\mathbf{1}$
\end{tabular}
\end{center}

Thus, if we perturb the Ising CFT by the bulk field $\bm{\epsilon}$ (the temperature perturbation), the identity defect and the $\epsilon$-defect remain topological, giving a $\mathbb{Z}_2$-symmetry which is conserved along the RG flow. This corresponds to the spin-flip symmetry of the Ising lattice model in zero magnetic field. The $\sigma$-defect is not conserved but instead changes the sign of the perturbation when one commutes it past the perturbing field $\epsilon$, implementing high-low temperature duality \cite{Frohlich:2004ef}.

For the $\bm{\sigma}$-perturbation (magnetic field perturbation), only the identity defect is preserved.
The $\epsilon$ defect is not preserved but changes the sign of the $\bm{\sigma}$-perturbation.

\subsubsection*{Example: Lee-Yang model}

The Lee-Yang model is the non-unitary minimal model CFT with central charge $c=-22/5$. It is built from two Virasoro representations $R_h$ with $h=0$ and $h=-1/5$. We write $\tau = R_{-1/5}$ and $\bm{\tau} = |{-}\frac1{5}\rangle \otimes_\mathbb{C} |{-}\frac1{5}\rangle$. Again there is no non-trivial surface defect and the line defects are labelled $\mathbf{1}$ and $\tau$ with fusion rule $\tau \otimes \tau = \mathbf{1} \oplus \tau$. The $\tau$ defect does not commute with the bulk field $\bm{\tau}$ and hence only the identity defect is preserved if we perturb the Lee-Yang CFT by the bulk field $\bm{\tau}$.

\medskip

In the next section we will see that if we pass to the larger class of translation invariant defects, of which topological ones are a special case, we do find defects compatible with the $\bm{\sigma}$-perturbation of the Ising CFT and with the $\bm{\tau}$-perturbation of the Lee-Yang CFT.

\section{Translation invariant line defects}

A line defect $D$ is called \textit{translation invariant} if its defect operator $\widehat D$ commutes with the Hamiltonian. In more detail, consider a 2d\,QFT on a cylinder of circumference $L$ and place the line defect $D$ along the periodic direction. This produces the defect operator $\widehat D : \mathcal{H} \to \mathcal{H}$ on the state space $\mathcal{H}$ of the QFT on a circle of circumference $L$. Denoting by $H(L)$ the Hamiltonian acting along the cylinder, the condition is
\begin{equation}
	\big[H(L),\widehat{D}\big]=0 \ .
\end{equation}
Such defects have been investigated for example in Refs.\,\citen{Bazhanov:1994ft,Konik:1997gx,Bachas:2004sy,Runkel:2007wd}.

In general, the process of fusing two line defects is singular \cite{Bachas:2007td,Bachas:2013ora}.
For translation invariant defects, however, this fusion is non-singular as the overall operator given by two parallel line defects on a cylinder with distance $r$ is independent of $r$.
Accordingly, translation invariant defects will again form a tensor category \cite{Ambrosino:2025myh}, which we denote by $\mathcal{T}$. The objects of $\mathcal{T}$ are translation invariant defects, and the morphisms are topological point junctions (in the sense that they can be translated freely along the defect, as well as with the defect).
The tensor category $\mathcal{T}_\text{top}$ of topological defects is a full subcategory
\begin{equation}\label{eq:Ttop-in-T}
		\mathcal{T}_\text{top} \subset \mathcal{T} \ .
\end{equation}
Indeed, morphisms between two topological defects were also defined to be topological point junctions.

The main point of these proceedings and of Ref.\ \citen{Ambrosino:2025myh} is to propose that it is useful to \textit{think of $\mathcal{T}$ as an extension of the topological symmetry $\mathcal{T}_\text{top}$} which shares some of its good properties, like the existence of a well-defined fusion operation, and which can provide useful information about a QFT, for example on its renormalisation group behaviour.

\medskip

If the QFT in question is a CFT, it makes sense to talk about conformal defects. A defect which is both conformal and translation invariant is actually topological \cite{Ambrosino:2025myh}. This is an easy consequence of the commutators with the Virasoro modes (after mapping the cylinder to the  plane):
\begin{align}
	\text{$D$ is conformal}
	~&:~ [L_n - \overline L_{-n},\widehat{D}] = 0
	\text{ for all } n \in \mathbb{Z} \ ,
\nonumber	\\
	\text{$D$ is translation invariant}
	~&:~ [L_0 + \overline L_{0},\widehat{D}] = 0 \ .
\end{align}
The two conditions together imply that $L_0$ and $\overline L_0$ commute separately with $\widehat{D}$, and combining this with the first condition one quickly checks that all $L_n$ and $\overline L_n$ individually commute with $\widehat{D}$.

\begin{figure}

\centering{\includegraphics[scale = 1]{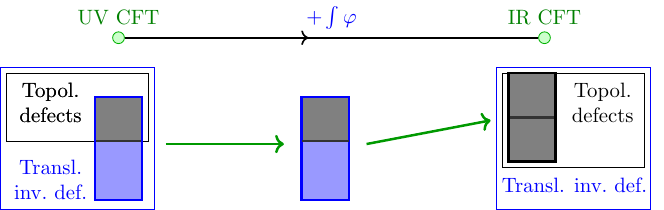}}

\caption{Translation invariant defects along a renormalisation group flow. The translation invariant defects compatible with the bulk perturbation flow to topological defects of the IR CFT.}
\label{fig:transinv-def-RG}
\end{figure}

An important consequence is the following simple observation: \textit{Under the renormalisation group, a translation invariant defect flows to a topological defect}.
This applies to both, defect flows when the bulk CFT remains unperturbed, and combined bulk and defect flows.
In particular, if we include translation invariant defects, we can update Figure~\ref{fig:top-def-RG} to Figure~\ref{fig:transinv-def-RG}.
By including translation invariant defects, one may thus hope to see more of the topological symmetry of the IR CFT than if one would restricting oneself to topological defects.

\medskip

There is a simple sufficient condition for a perturbed defect to be translation invariant in the perturbed (or unperturbed) CFT to all orders in the coupling constant. This is the topic of the next section.

\section{The commutation condition}\label{sec:commcond}

Let the 2d\,CFT be described by a surface defect $\alpha$ of the chiral TFT, and let $\varphi = \phi_X \otimes_\mathbb{C} \phi_Y$ be a bulk field for the line defects $X,Y$ and junction $\omega$ as in Figure~\ref{fig:three-bulk-fields}c. The case $\varphi=0$ is allowed and corresponds to $\omega$ being zero.

The Hamiltonian of the perturbed CFT on a cylinder of circumference $L$ is
\begin{equation}\label{eq:H0+Hpert}
	H(\mu) = H_0 + H_\mathrm{pert}(\mu) \ ,
\end{equation}
where
\begin{equation}\label{eq:Hpert}
	H_0 = \frac{2\pi}{L} \Big( L_0 + \overline L_0 - \frac{c}{12} \Big)
	~,\quad
	H_\mathrm{pert}(\mu) = 2i \mu \int_0^L \hspace{-.4em} \varphi(s) \, ds  \ .
\end{equation}
The factor of $2i$ in $H_\mathrm{pert}(\mu)$ is a convention and results in fewer factors of $i$ later on. Having both $\mu$ and $\omega$ is redundant: to obtain $\mu \varphi$ one can replace $\omega$ by $\mu\omega$. We will still keep $\mu$ as it helps to separate the different orders in the perturbative expansion.

We will use a very specific form of perturbation for the topological line defects. Namely let $D$ label a topological defect (which can be elementary or a sum of elementary defects). On $D$ we consider a holomorphic defect field $\psi$ and an antiholomorphic defect field $\bar\psi$ which are represented in chiral TFT as:
\begin{equation}
\includegraphics[valign = c, scale=0.75]{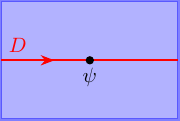} ~=~ \includegraphics[valign = c, scale=0.75]{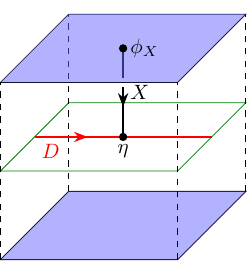}
\qquad \qquad
\includegraphics[valign = c, scale=0.75]{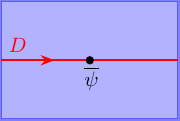} ~=~ \includegraphics[valign = c, scale=0.75]{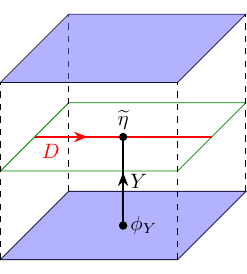}
\end{equation}
Let us look at $\psi$ in more detail, the description of $\bar\psi$ is analogous. Firstly, $D$ is a line defect on the surface defect $\alpha$ of the chiral TFT. Next, $X$ is the same line defect as used to define the perturbing bulk field $\varphi$ in Figure~\ref{fig:three-bulk-fields}c, and $\phi_X \in X$ is the same vector as used in the definition of $\varphi$. Finally, $\eta$ is a topological point junction on the surface defect $\alpha$, joining $X$ to $D$. In terms of the 2d\,CFT, $\psi$ is a holomorphic defect field on the topological defect $D$ of $(L_0,\bar L_0)$-weights $(h_{\phi_X},0)$. Analogously $\bar\psi$ is an antiholomorphic defect field with weights $(0,h_{\phi_Y})$.
It is allowed for $\psi$ or $\bar\psi$ (or both) to be zero, which is implemented by $\eta$ or $\tilde\eta$ being zero.

We say that the fields $\psi$, $\bar\psi$ satisfy the \textit{commutation condition for $\varphi$} if the following identity holds in the chiral TFT description \cite{Runkel:2010ym,Buecher:2012ma,Ambrosino:2025myh}:
\begin{equation}\label{eq:commcond}
\includegraphics[scale=0.75,valign = c]{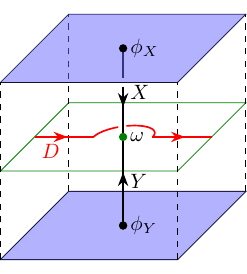} ~-~ \includegraphics[scale=0.75,valign = c]{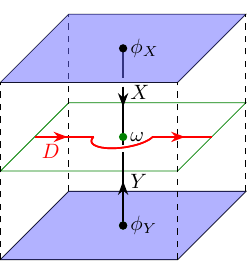} ~~=~~ \includegraphics[scale=0.75,valign = c]{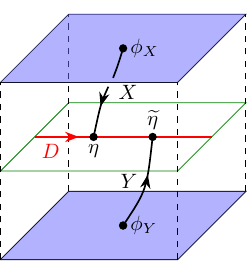} ~-~ \includegraphics[scale=0.75,valign = c]{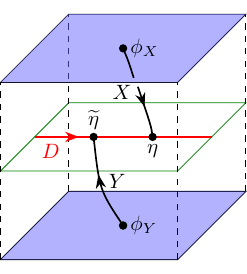}
\end{equation}

To explain the relevance to perturbed defects, let us write
\begin{equation}
	D(\lambda \psi + \tilde\lambda \bar\psi)
\end{equation}
for the topological defect $D$ perturbed by $\lambda \psi + \tilde\lambda \bar\psi$. This amounts to inserting the exponentiated integral $\exp\int_0^L \big(\lambda \psi(s) + \tilde\lambda \bar\psi(s)\big) ds$ on the defect $D$. It is redundant to have both $\lambda,\tilde\lambda$ and $\eta,\tilde\eta$ as we could instead have used $\lambda\eta$ and $\tilde\lambda \tilde\eta$. As for $\mu$ in the bulk perturbation \eqref{eq:Hpert}, we keep the parameters $\lambda,\tilde\lambda$ to help organise the perturbative expansion.

We assume that each individual multiple integral in the expansion of the exponential does not require regularisation, i.e.\ that the singularities occurring in the iterated operator product expansions of $\psi$'s and $\bar\psi$'s are mild enough so that one can integrate over them. Let $h_\text{min}$ be the minimal conformal weight of a holomorphic field generated by the OPEs of $\psi$ with itself. For a unitary CFT this would be the identity field  ($h_\text{min}=0$), but for non-unitary theories one can have $h_\text{min}<0$. Then a necessary condition for regularity of the integrals is
\begin{equation}\label{eq:regularity}
	2h_{\phi_X}-h_\text{min}<1 \ ,
\end{equation}
and analogously for $\bar\psi$.
For a more detailed discussion of singularities of iterated OPEs see e.g.\ Ref.\ \citen{Klassen:1990dx}.

If the commutation condition \eqref{eq:commcond} and the regularity condition \eqref{eq:regularity}
are satisfied, one can show that $D(\lambda \psi + \tilde\lambda \bar\psi)$ is a translation invariant defect in the 2d\,CFT perturbed by $2i\mu\varphi$ to all orders in $\lambda$, $\tilde\lambda$ for the choice of bulk coupling constant given by $\mu = \lambda \tilde\lambda$. In terms of the defect operator $\widehat D(\lambda \psi + \tilde\lambda \bar\psi)$ of the perturbed defect, this means that\cite{Runkel:2010ym,Buecher:2012ma}
\begin{equation}\label{eq:def-comm-ham}
		\big[ \,H(\mu) \,,\, \widehat D(\lambda \psi + \tilde\lambda \bar\psi) \, \big] \,=\, 0
		\quad \text{for}~~\mu=\lambda\tilde\lambda
\end{equation}
to all orders in $\lambda$, $\tilde\lambda$. The condition $\mu=\lambda\tilde\lambda$ is absent if any of $\omega$, $\eta$, $\tilde\eta$ are zero.

\medskip

Let us look at two special cases of the commutation condition \eqref{eq:commcond}.

\medskip

\noindent
1) The first case is that $\eta=0=\tilde\eta$. Then $\psi=0=\bar\psi$ and the topological defect is actually unperturbed. The right hand side of the commutation condition is  zero, so that the condition now is that the topological defect $D$ commutes with the perturbing field $\varphi$ as in \eqref{eq:topdef-preserved-chiralTFT}. Hence this special case covers topological defects $D$ that remain topological in the theory perturbed by $\varphi$.

\medskip

\noindent
2) The second case is $\omega=0$ (or equivalently $\varphi=0$), and the bulk CFT is unperturbed. In this case, the left hand side of the commutation condition is zero, and it now requires the two defect fields $\psi$ and $\bar\psi$ to commute with each other. This is automatically the case if $\tilde\eta=0$, or, equivalently, $\bar\psi=0$. The defect $D$ is then perturbed only by the holomorphic defect field $\psi$ and indeed it is easy to see that such a perturbed defect is always translation invariant in the unperturbed bulk CFT \cite{Bazhanov:1994ft,Konik:1997gx,Bachas:2004sy,Runkel:2007wd}. The same holds if $\eta=0$ (i.e.\ $\psi=0$) so that the defect is only perturbed by $\bar\psi$. The mixed situation with both $\psi$ and $\bar\psi$ nonzero occurs for example when one fuses a defect perturbed by only $\psi$ with one perturbed by only $\bar\psi$ as described in the next section.

Recall that a translation invariant defect flows to a topological defect. In the special case $\omega=0$ this means that if we start from a topological defect $D$ in our 2d\,CFT, perturbing $D$ by a commuting pair $\psi$, $\bar\psi$ generates a flow to another topological defect of the same CFT. For a general perturbation this is typically not the case \cite{Kormos:2009sk,Popov:2025cha}.

\medskip

In the general case with $\omega \neq 0$, the combined bulk and defect flow is as in Figure~\ref{fig:transinv-def-RG}.

\subsubsection*{Example: Ising 2d\,CFT}

We consider the perturbation of the Ising CFT by the bulk field $\bm{\epsilon}$ (temperature perturbation) and $\bm{\sigma}$ (magnetic field perturbation) in turn. In both cases we will see that there are more translation invariant defects compatible with the perturbation than there are topological ones.

\medskip

\noindent
\textit{$\bm{\epsilon}$-perturbation:} In the notation in Figure~\ref{fig:three-bulk-fields}c we have $X=\epsilon=Y$, $\phi_X = |\frac12 \rangle = \phi_Y$. The space of topological junctions $\omega$ is one-dimensional, and we fix a non-zero $\omega$; this determines the normalisation of the bulk field $\bm{\epsilon}$.

We saw in Section~\ref{sec:topdef} that the $\epsilon$-defect commutes with $\bm{\epsilon}$, that is, it satisfies the commutation condition with $\eta=0=\tilde\eta$. The topological $\sigma$-defect does not commute with $\bm{\epsilon}$, but there is a solution to the commutation condition with non-zero $\eta$, $\tilde\eta$. However, in this case the regularity condition \eqref{eq:regularity} is not satisfied as we have $h=\frac12$ and $h_\text{min}=0$. The integrals entering the perturbed defect $D_\sigma(\lambda \psi + \tilde\lambda\bar\psi)$ need to be regularised and one needs to check separately that there is regularisation which does not break translation invariance. We return to this point briefly in the next section.

\medskip

\noindent
\textit{$\bm{\sigma}$-perturbation:} In Figure~\ref{fig:three-bulk-fields}c we set $X=\sigma=Y$, $\phi_X = |\frac1{16} \rangle = \phi_Y$, and $\omega \neq 0$. The space of topological junctions is again one-dimensional and we fix a choice of $\omega$. The commutation condition has no solution with $\eta=0=\tilde\eta$ other than the $\mathbf{1}$-defect, that is, no non-trivial topological symmetry is preserved along the flow. However, for the non-elementary defect $D = \mathbf{1} \oplus \sigma$, the commutation condition does have a solution for some non-zero $\eta$, $\tilde\eta$. The regularity condition \eqref{eq:regularity} is satisfied with $h=\frac1{16}$, $h_\text{min}=0$, and we obtain a translation invariant defect operator $D(\lambda \psi + \tilde\lambda\bar\psi)$ in the Ising CFT perturbed by $\bm{\sigma}$\cite{Ambrosino:2025myh}.

\subsubsection*{Example: Lee-Yang model}

Consider the perturbation of the Lee-Yang CFT by the bulk field $\bm{\tau}$, i.e.\ $X=\tau=Y$, $\phi_X = |{-}\frac1{5} \rangle = \phi_Y$, and $\omega \neq 0$.
The regularity condition \eqref{eq:regularity} is satisfied with $h=-\frac15$ and $h_\text{min}=-\frac15$.
The only solution to the commutation
with $\eta=0=\tilde\eta$ is the $\mathbf{1}$-defect, so that the $\bm{\tau}$-perturbation preserves no non-trivial topological symmetry. But for the $\tau$-defect we can find a solution with non-zero $\eta,\tilde\eta$, giving a translation invariant defect in the perturbed theory\cite{Runkel:2010ym,Ambrosino:2025myh}.

\medskip

In Ref.\ \citen{Ambrosino:2025myh} we carried out a more expansive search for solutions to the commutation condition in diagonal Virasoro minimal model CFTs. We found generic solutions when the representations $X,Y$ of the perturbing field in Figure~\ref{fig:three-bulk-fields}c are given by the Kac-labels
$(1,2)$, $(1,3)$, or $(1,5)$. These are well-known to be integrable deformations. We also found sporadic additional solutions, such as the $(1,7)$-perturbation of $M(3,10)$ which has recently been investigated in Refs.\,\citen{Katsevich:2024jgq,Ambrosino:2025yug}. However, the additional solutions we found do not satisfy the regularity condition \eqref{eq:regularity} and will require regularisation.

\section{Fusion of translation invariant defects and functional relations}

Fix a 2d\,CFT and a bulk field $\varphi$ determined by $X,Y,\omega$ and $\phi_X$, $\phi_Y$ as in Figure~\ref{fig:three-bulk-fields}c. Write $C(2i\mu\varphi)$ for the 2d\,CFT perturbed by $2i\mu\varphi$ for some $\mu \in \mathbb{C}$,
so that its Hamiltonian is given by \eqref{eq:H0+Hpert}. Suppose that the regularity condition \eqref{eq:regularity} holds.\footnote{This condition is only necessary, we are really assuming the multiple integrals defining the perturbed defects to be finite.}

Solutions $(D,\eta,\tilde\eta)$ to the commutation condition \eqref{eq:commcond} produce perturbative examples of translation invariant defects in $C(2i\mu\varphi)$. Let us denote the category formed by these translation invariant defects by $\mathcal{T}_\text{pert}$, so that \eqref{eq:Ttop-in-T} gets refined to
\begin{equation}\label{eq:Tpert-in-T}
		\mathcal{T}_\text{top} \subset
        \mathcal{T}_\text{pert} \subset \mathcal{T} \ .
\end{equation}
Next we show that $\mathcal{T}_\text{pert}$ is even a monoidal subcategory by describing the tensor product explicitly.

\medskip

Consider two solutions $(D,\eta,\tilde\eta), (E,\zeta,\tilde\zeta)$ of the commutation condition \eqref{eq:commcond}. The corresponding perturbed defects $D(\lambda,\tilde\lambda)$ and $E(\lambda',\tilde\lambda')$ are translation invariant in $C(2i\mu\varphi)$, provided that
$\lambda \tilde\lambda = \mu = \lambda'\tilde\lambda'$. Placing the two defects parallel to each other on a cylinder with distance $r$, by translation invariance any correlator is independent of $r$ (for $r$ small enough, so that no other fields are in the strip between the two defects), and we get a non-singular fusion of the two defects for $r \to 0$.

The fused defect is given by the fused topological defect $D \otimes E$, perturbed by $\psi$ and $\bar\psi$ given by sums of the perturbing fields on $D$ and $E$. For $\psi$ the chiral TFT representation is
\begin{equation}
\includegraphics[valign =c,scale=0.75]{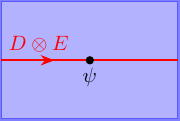} ~~ = ~~ \includegraphics[valign =c,scale=0.75]{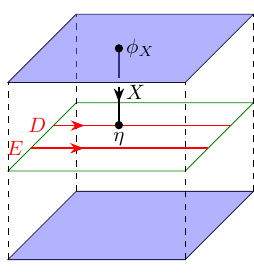} ~ + ~ \includegraphics[valign =c,scale=0.75]{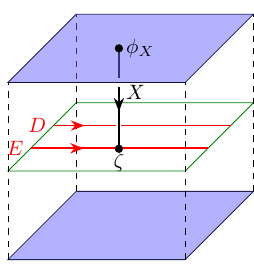}
\end{equation}
and for $\bar\psi$ the picture is analogous.
The fused defect still satisfies the commutation condition. This is easy to check from the explicit expression for the perturbing fields above and the fact that $(D,\eta,\tilde\eta)$ and $(E,\zeta,\tilde\zeta)$ were solutions to start with.
This shows that $\mathcal{T}_\text{pert}$ is closed under fusion and provides us with an explicit description of the tensor product.

\medskip

Next we need the following simple but important sufficient condition for a perturbed defect operator to decompose into a sum of other perturbed defects:
Suppose the underlying topological defect $D$ can be split into a direct sum $D = A \oplus B$ of topological defects in such a way, that the perturbation contains no defect changing field $\Psi_{A \to B}$ from $A$ to $B$ (but possibly one for $B$ to $A$). Then the perturbed defect operator $\widehat D$ decomposes as follows:
\begin{equation}
	\widehat D(\Psi_{A \to A} + \Psi_{B \to B} + \Psi_{B \to A})
    = \widehat A(\Psi_{A \to A}) + \widehat B(\Psi_{B \to B}) \ .
\end{equation}
This holds because the defect $D$ is placed on a cylinder with periodic boundary conditions, and a term in the perturbative expansion with at least one $\Psi_{B \to A}$ is zero, since the counterpart $\Psi_{A \to B}$ is absent. Note that this does not require the defect perturbation to be of the specific form considered in Section~\ref{sec:commcond}, and that that case is recovered for $\Psi = \lambda\psi + \tilde\lambda\bar\psi$.

\medskip

Combining the fusion of defects with the decomposition of defect operators can lead to interesting functional relations which can constrain renormalisation group flows. Below we give examples of such functional relations in the Lee-Yang and Ising CFT in the simplified situation the the bulk CFT remains unperturbed, and that the defects are only perturbed by a holomorphic field, i.e.\ $\omega=\tilde\eta=0$.

\subsubsection*{Example: Lee-Yang model}

Let $D=\tau$ and $X=\tau$, that is, we are considering the $\tau$-defect perturbed by the holomorphic primary of weight $h=-\frac15$. Write $D(\lambda)$ for the corresponding perturbed defect. Upon normalising the defect field appropriately, one finds the following functional relation for the defect operators \cite{Runkel:2007wd,Manolopoulos:2009np}
\begin{equation}
    \widehat D(e^{2\pi i/5} \lambda) \, \widehat D(e^{-2\pi i/5} \lambda)
    \,=\,
    \mathrm{id} + \widehat D(\lambda) \ .
\end{equation}
Using this relation one can constrain the endpoint of the defect flow to be the $\mathbf{1}$-defect
\cite{Manolopoulos:2009np}, in agreement with TCSA and TBA results \cite{Dorey:1997yg}.

\subsubsection*{Example: Ising model}

Consider the case $D=\sigma$ and $X=\epsilon$, that is, the $\sigma$-defect perturbed by the holomorphic primary field of weight $h=\frac12$. If one computes the functional relation according to the formalism presented here, one finds
\begin{equation}
    \widehat D(e^{i \pi/4} \lambda)\,\widehat D(e^{-i \pi/4} \lambda)\,=\,\mathrm{id} + \widehat\epsilon \ .
\end{equation}
However, as already mentioned in Section~\ref{sec:commcond}, this perturbation does not satisfy the regularity condition \eqref{eq:regularity} and needs regularisation. For the corresponding boundary perturbation this has been done exactly in Ref.\ \citen{Chatterjee:1994sv}. For the defect perturbation one finds a correction to the functional relation \cite{Gaiotto:2020fdr},
\begin{equation}
    \widehat D(e^{i \pi/4} \lambda)\,\widehat D(e^{-i \pi/4} \lambda)\,=\,\mathrm{id} + e^{-2 \pi \lambda^2}\widehat\epsilon \ .
\end{equation}
This was used in Ref.\ \citen{Gaiotto:2020fdr} to argue that the $\sigma$-defect flows to the $\mathbf{1}$- or $\epsilon$-defect, depending on the sign of  $\lambda$.

\section{Mathematical description and Yetter-Drinfeld modules}

Here we define a monoidal category $\mathcal{M}_\omega$ which will give a representation theoretic description of $\mathcal{T}_\text{pert}$. We will introduce $\mathcal{M}_\omega$ in a slightly more general setting than the one discussed so far.

Let $\mathcal{C}$ and $\mathcal{D}$ be additive braided monoidal categories, and let $\mathcal{M}$ be an additive monoidal category. In terms of the chiral TFT picture in \eqref{eq:topdef-preserved-chiralTFT}, this amounts to allowing different 3d\,TFTs with different braided categories of line defects above and below the surface defect $\alpha$. The line $X$ in \eqref{eq:topdef-preserved-chiralTFT} would then belong to $\mathcal{C}$ and the line $Y$ to $\mathcal{D}$. The monoidal category $\mathcal{M}$ describes the line defects $D$ localised on the surface defect $\alpha$.

In addition, we fix additive braided monoidal functors $F : \mathcal{C} \to \mathcal{Z}(\mathcal{M})$, $G : \mathcal{D} \to \mathcal{Z}(\mathcal{M})$ where $\mathcal{Z}(\mathcal{M})$ is the Drinfeld centre of $\mathcal{M}$. The images of functors $F$ and $G$ are required to be transparent relative to each other. In more detail, if $\gamma_{FX,-}$ denotes the half-braiding on $F(X)$, and $\delta_{GY,-}$ that on $G(Y)$, we require
\begin{equation} \label{eq:FG-commute}
  \delta_{GY,FX} \circ \gamma_{FX,GY}  = \mathrm{id}_{FX \otimes GY} \ .
\end{equation}
Incidentally, this data defines a 1-morphism in the symmetric monoidal 4-category $\mathrm{BrTens}$ \cite{Brochier:2018kxc}.
In terms of chiral TFT, the functor $F$ encodes how to fuse line defects from $\mathcal{C}$ above the surface defect $\alpha$ into the surface defect, where they turn into the line defect $F(X)$ in $\alpha$, and similarly for $G$, see also Ref.\ \citen{Fuchs:2012dt}.

Denote by $U : \mathcal{Z}(\mathcal{M}) \to \mathcal{M}$ the functor which forgets the half-braiding. As a final piece of data, we fix $X \in \mathcal{C}$, $Y\in\mathcal{D}$ and a morphism
\begin{equation}
	\omega : UF(X) \otimes UG(Y) \to \mathbf{1}
\end{equation}
in $\mathcal{M}$. In the following we will often omit writing the forgetful functor $U$ explicitly.

\begin{definition}
\begin{enumerate}
    \item A triple $(D,\eta,\tilde\eta)$ with
\begin{equation}
	D \in \mathcal{M}
    ~,~~~
	\eta : F(X) \otimes D \to D
    ~,~~~
	\tilde\eta : D \otimes G(Y) \to D
\end{equation}
satisfies the \textit{commutation condition}, if the following identity of morphisms in $\mathcal{M}$ holds (written as string diagrams in $\mathcal{M}$, read from bottom to top):
\begin{equation}\label{eq:commcond-cat}
\includegraphics[scale=0.9, valign = c]{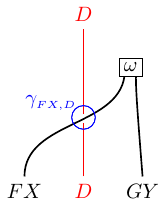} - \includegraphics[scale=0.9, valign = c]{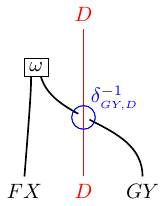}   = \includegraphics[scale=0.9, valign = c]{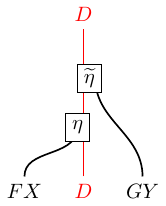}  - \includegraphics[scale=0.9, valign = c]{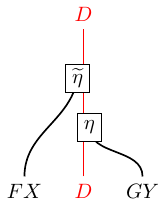}
\end{equation}
\item The category $\mathcal{M}_\omega$ has as objects triples $(D,\eta,\tilde\eta)$ which satisfy the commutation condition. A morphism $f : (D,\eta,\tilde\eta) \to (E,\zeta,\tilde\zeta)$ in $\mathcal{M}_\omega$ is a morphism $f : D \to E$ in $\mathcal{M}$ which satisfies
\begin{equation}
    f \circ \eta = \zeta \circ (\mathrm{id}_{FX} \otimes f)
    \quad \text{and} \quad
    f \circ \tilde\eta = \tilde\zeta \circ (f \otimes \mathrm{id}_{GY}) \ .
\end{equation}
\end{enumerate}
\end{definition}

It is understood that through $\omega$, the category $\mathcal{M}_\omega$ implicitly depends also on the other ingredients $\mathcal{C},\mathcal{D},F,G,X,Y$.
The category $\mathcal{M}_\omega$ is monoidal, with tensor product $\circledast$ defined as
\begin{equation}\label{eq:M-omega-tensor}
     (D,\eta,\tilde\eta) \circledast (E,\zeta,\tilde\zeta)
     \,:=\, \big( D \otimes E , T(\eta,\zeta), \tilde T(\tilde\eta,\tilde\zeta) \big) \ ,
\end{equation}
where $D \otimes E$ is the tensor product in $\mathcal{M}$, and
\begin{equation}
    T(\eta,\zeta) = \hspace{-.2cm}
 \includegraphics[scale=0.9, valign = c]{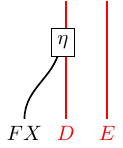} \hspace{-0.05cm} +
 \includegraphics[scale=0.9, valign = c]{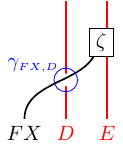}
    ~,~~
    \tilde T(\tilde\eta,\tilde\zeta) =
 \includegraphics[valign = c]{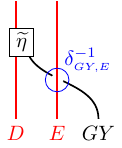} + \includegraphics[valign = c]{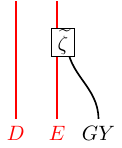}
\end{equation}
One easily checks that the right hand side of \eqref{eq:M-omega-tensor} does again satisfy the commutation condition \eqref{eq:commcond-cat}. The tensor unit is given by $(\mathbf{1},0,0)$.

\medskip

Recall from Section~\ref{sec:commcond} that one can absorb the coupling constant $\mu$ of the bulk perturbation into the topological point junction by replacing $\omega$ by $\mu\omega$. The following proposition states that for non-zero $\mu$, up to equivalence the category of perturbative translation invariant defects does not depend on $\mu$:

\begin{proposition}
Suppose all categories and functors are $\mathbb{C}$-linear. Then for all $\mu \in \mathbb{C}$, $\mu \neq 0$ we have $\mathcal{M}_{\omega} \cong \mathcal{M}_{\mu \omega}$ as additive monoidal categories.
\end{proposition}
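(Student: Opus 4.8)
The plan is to exhibit an explicit strict monoidal functor $\Phi\colon\mathcal{M}_\omega\to\mathcal{M}_{\mu\omega}$ which is the identity on underlying objects of $\mathcal{M}$ and on morphisms, and merely rescales the structure maps $\eta,\tilde\eta$. Fix $a,b\in\mathbb{C}^{\times}$ with $ab=\mu$ --- for instance $a=\mu$, $b=1$; such a factorisation into invertible factors exists precisely because $\mathbb{C}$ is a field and $\mu\neq 0$ --- and set $\Phi(D,\eta,\tilde\eta):=(D,\,a\eta,\,b\tilde\eta)$ and $\Phi(f):=f$. To see that $\Phi$ lands in $\mathcal{M}_{\mu\omega}$, write $L(\omega)$ and $R(\eta,\tilde\eta)$ for the left- and right-hand sides of the commutation condition \eqref{eq:commcond-cat}. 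The key structural observation, read off from the string diagrams, is that $L$ is $\mathbb{C}$-linear in $\omega$, whereas $R$ is bilinear: of degree $1$ in $\eta$ and of degree $1$ in $\tilde\eta$, each of its two diagrams carrying exactly one $\eta$ and one $\tilde\eta$. (This matches the bookkeeping $\mu\leftrightarrow\lambda\tilde\lambda$ from around \eqref{eq:def-comm-ham}, and the fact that \eqref{eq:commcond-cat} reduces to the condition that $D$ commutes with $\varphi$ when $\eta=\tilde\eta=0$, and to the condition that $\psi$ and $\bar\psi$ commute when $\omega=0$.) Hence, for $(D,\eta,\tilde\eta)\in\mathcal{M}_\omega$, i.e.\ with $L(\omega)=R(\eta,\tilde\eta)$, one has
\[
 L(\mu\omega)=\mu\,L(\omega)=\mu\,R(\eta,\tilde\eta)=ab\,R(\eta,\tilde\eta)=R(a\eta,b\tilde\eta),
\]
so $(D,a\eta,b\tilde\eta)$ satisfies the commutation condition for $\mu\omega$.

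Next I would observe that $\Phi$ is a well-defined additive functor and in fact an isomorphism of categories. Being the identity on $\mathrm{Hom}$-sets, it trivially preserves identities, composition and addition; the only point to check is that an $\mathcal{M}_\omega$-morphism $f$ satisfying $f\circ\eta=\zeta\circ(\mathrm{id}_{FX}\otimes f)$ and $f\circ\tilde\eta=\tilde\zeta\circ(f\otimes\mathrm{id}_{GY})$ still intertwines the rescaled maps, which is immediate after multiplying these two identities by $a$ and by $b$ respectively. The functor $\mathcal{M}_{\mu\omega}\to\mathcal{M}_\omega$ built with the scalars $a^{-1},b^{-1}$ (note $a^{-1}b^{-1}=\mu^{-1}$, the correct factor since $\omega=\mu^{-1}\cdot\mu\omega$) is a two-sided inverse, so $\Phi$ is an isomorphism, hence in particular an equivalence, of additive categories.

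It remains to check monoidality. Since $\Phi$ leaves underlying objects and their tensor product in $\mathcal{M}$ unchanged, it suffices that $\Phi$ intertwines the structure maps of the tensor product \eqref{eq:M-omega-tensor}, i.e.\ that $T(a\eta,a\zeta)=a\,T(\eta,\zeta)$ and $\tilde T(b\tilde\eta,b\tilde\zeta)=b\,\tilde T(\tilde\eta,\tilde\zeta)$. This is again a homogeneity statement: each summand of $T(\eta,\zeta)$ (respectively of $\tilde T$) is linear in exactly one of $\eta,\zeta$ (respectively $\tilde\eta,\tilde\zeta$) --- geometrically, the perturbation of the fused defect is the sum of the perturbations carried by the two strands --- so $T$ and $\tilde T$ are homogeneous of degree $1$ under simultaneous rescaling. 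With this and $\Phi(\mathbf{1},0,0)=(\mathbf{1},0,0)$, the functor $\Phi$ is strict monoidal, the coherence morphisms being identities and hence satisfying the monoidal-functor axioms trivially. Together with the previous paragraph this yields $\mathcal{M}_\omega\cong\mathcal{M}_{\mu\omega}$ as additive monoidal categories.

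The one step that deserves genuine attention is the homogeneity bookkeeping: one must confirm from the diagrams in \eqref{eq:commcond-cat} and \eqref{eq:M-omega-tensor} that the $\omega$-side of the commutation condition has degree $1$ in $\omega$, that its $(\eta,\tilde\eta)$-side carries exactly one $\eta$ and one $\tilde\eta$, and that $T$ and $\tilde T$ are of degree $1$ under joint rescaling. Once this is established the relation $ab=\mu$ is forced, and --- crucially --- because the very same scalars $a,b$ are used for every object at once, $\Phi$ is simultaneously compatible with morphisms and with the tensor product $\circledast$. There is no deeper obstacle.
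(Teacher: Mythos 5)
Your proposal is correct and follows essentially the same route as the paper: the paper's proof also fixes scalars $\lambda,\tilde\lambda$ with $\lambda\tilde\lambda=\mu$ and sends $(D,\eta,\tilde\eta)\mapsto(D,\lambda\eta,\tilde\lambda\tilde\eta)$, relying on exactly the homogeneity bookkeeping (degree $1$ in $\omega$ on one side, one $\eta$ and one $\tilde\eta$ per term on the other, and degree $1$ of $T,\tilde T$) that you spell out. You have simply made explicit the checks the paper leaves as "straightforward".
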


\begin{proof}
    It is easy to write explicit functors both ways. Fix $\lambda,\tilde\lambda$ such that $\lambda\tilde\lambda=\mu$. For example, if $(D,\eta,\tilde\eta) \in M_{\omega}$, then $(D,\lambda\eta,\tilde\lambda\tilde\eta) \in M_{\mu\omega}$. It is straightforward to check monoidality of these functors.
\end{proof}

One can check that if $\mathcal{M}$ is pivotal, so is $\mathcal{M}_\omega$, and if $\mathcal{M}$ is abelian, so is $\mathcal{M}_\omega$ \cite{Manolopoulos:2009np}. However, as we will see in the Lee-Yang example below, even if $\mathcal{M}$ is a fusion category, $\mathcal{M}_\omega$ will typically have continuously many simple objects and will no longer be semisimple.

\medskip

Let $\mathcal{M}^\oplus \supset \mathcal{M}$ be the completion of $\mathcal{M}$ with respect to countable direct sums. Then $\mathcal{M}^\oplus$ contains the tensor algebras $T(X) = \mathbf{1} \oplus X \oplus X^{\otimes 2} \oplus X^{\otimes 3} \oplus \dots$ and $T(Y)$, where we abbreviate $X = UF(X)$ and $Y = UG(Y)$. Due to the inherited half-braiding, $T(X)$ and $T(Y)$ are braided Hopf algebras in $\mathcal{M}^\oplus$. The morphism $\omega$ induces a Hopf pairing $\hat\omega : T(X) \otimes T(Y) \to \mathbf{1}$ \cite{Buecher:2012ma}.

A \textit{Yetter-Drinfeld module} in $\mathcal{M}$ is an object $D$ of $\mathcal{M}$ which is both a $T(X)$-left module and a $T(Y)$-right module, but in general not a bimodule. Instead it satisfies \cite{Bespalov1995,Buecher:2012ma}
\begin{equation}
\includegraphics[scale=0.9, valign = c]{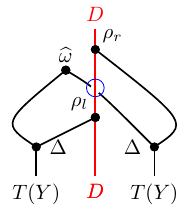}\quad = \quad \includegraphics[scale=0.9, valign = c]{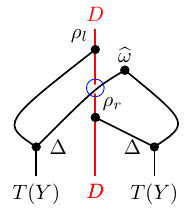}
\end{equation}
Here, $\Delta$ denotes the coproduct of $T(X)$ and $T(Y)$, respectively, and $\rho_\ell$ is the left action of $T(X)$ on $D$, and $\rho_r$ the right action of $T(Y)$ on $D$. The circled crossings are induced by the half braidings of $F(X)$ and $G(Y)$ as in \eqref{eq:commcond-cat}.

Write $\mathrm{YD}_{\mathcal{M},\omega}$ for the category of Yetter-Drinfeld modules. It is always a monoidal category, but it is not necessarily braided. The general formula for the braiding on $\mathrm{YD}_{\mathcal{M},\omega}$ involves the copairing for $\hat\omega$, which does not exist (for $X,Y \neq 0$) as $T(X)$ and $T(Y)$ are not dualisable.

\medskip

In fact, the category $\mathcal{M}_\omega$ is just a different way to talk about Yetter-Drinfeld modules for the Hopf algebras $T(X)$ and $T(Y)$ \cite{Buecher:2012ma}:

\begin{proposition}
Evaluating on $X \subset T(X)$ and $Y \subset T(Y)$ provides a monoidal equivalence $\mathrm{YD}_{\mathcal{M},\omega} \cong \mathcal{M}_\omega$.
\end{proposition}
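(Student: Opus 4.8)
The plan is to write the evaluation functor down explicitly, check that it is well defined and a bijection on objects, then that it is fully faithful, and finally that it is monoidal. On objects it sends a Yetter-Drinfeld module $(D,\rho_\ell,\rho_r)$ to the triple $(D,\eta,\tilde\eta)$ with $\eta=\rho_\ell|_{X\otimes D}$ and $\tilde\eta=\rho_r|_{D\otimes Y}$, where $X=UF(X)$, $Y=UG(Y)$.

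The starting observation is the free-algebra property of the tensor algebras in $\mathcal{M}^\oplus$. Since $T(X)=\bigoplus_{n\geq 0}X^{\otimes n}$, equipping an object $D\in\mathcal{M}$ with a left $T(X)$-action $\rho_\ell$ is the same as giving a single morphism $\eta:F(X)\otimes D\to D$ in $\mathcal{M}$, the action of $X^{\otimes n}$ being the $n$-fold iterate of $\eta$; conversely any $\rho_\ell$ is recovered from its restriction to $X\subset T(X)$. In the same way a right $T(Y)$-action on $D$ is the same as a morphism $\tilde\eta:D\otimes G(Y)\to D$. Hence the underlying data ``object plus left $T(X)$-action plus right $T(Y)$-action'' of a Yetter-Drinfeld module is literally a triple $(D,\eta,\tilde\eta)$, and evaluation on $X$ and $Y$ is a bijection at this level; it remains to match the extra conditions, the morphisms, and the monoidal structures.

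For the conditions I would show that under this bijection the Yetter-Drinfeld compatibility condition is equivalent to the commutation condition \eqref{eq:commcond-cat}. Restricting the Yetter-Drinfeld identity to the generators $X\subset T(X)$ and $Y\subset T(Y)$ and using that elements of $X$ and of $Y$ are primitive in the braided Hopf algebras $T(X)$ and $T(Y)$, the coproducts appearing in the identity collapse to their two primitive terms, and what survives is precisely the four-term identity \eqref{eq:commcond-cat}, with the circled crossings being the half-braidings of $F(X)$ and $G(Y)$ and the single occurrence of the Hopf pairing being $\hat\omega|_{X\otimes Y}=\omega$. (The transparency condition \eqref{eq:FG-commute} is what makes the two families of crossings interact consistently.) Conversely, that \eqref{eq:commcond-cat} implies the full Yetter-Drinfeld identity for all of $T(X)$ and $T(Y)$ is proved by induction on word length in the generators: one slides the $T(X)$-action past the $T(Y)$-action one generator at a time, each elementary step being an instance of \eqref{eq:commcond-cat}, while the action of the coproducts and of $\hat\omega$ on longer words is governed by their braided (co)multiplicativity. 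This inductive bookkeeping in braided string diagrams --- threading the half-braidings and $\hat\omega$ correctly through each step --- is the part I expect to be the main obstacle; the relevant combinatorics is, however, essentially that already carried out in Refs.\,\citen{Runkel:2010ym,Buecher:2012ma}, which can be invoked.

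For the morphisms and the monoidal structure the remaining checks are routine. A morphism of Yetter-Drinfeld modules is a morphism in $\mathcal{M}$ intertwining both actions, which by the free-algebra property is equivalent to intertwining $\eta$ and $\tilde\eta$, i.e.\ exactly the morphism condition defining $\mathcal{M}_\omega$; thus evaluation is fully faithful, and being bijective on objects it is an equivalence. Finally, on a tensor product $D\otimes E$ of Yetter-Drinfeld modules the actions are built from the coproducts $\Delta$ of $T(X)$ and $T(Y)$; evaluating on the generators and again using primitivity of $X$ and $Y$ reproduces exactly the two-term formulas $T(\eta,\zeta)$ and $\tilde T(\tilde\eta,\tilde\zeta)$ of \eqref{eq:M-omega-tensor}, and the unit $\mathbf{1}$ with trivial actions corresponds to $(\mathbf{1},0,0)$. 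Since the associativity and unit constraints on both sides are inherited from $\mathcal{M}$, the evaluation functor is strict monoidal, and combining the steps yields the monoidal equivalence $\mathrm{YD}_{\mathcal{M},\omega}\cong\mathcal{M}_\omega$.
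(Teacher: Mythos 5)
Your proposal is correct and follows essentially the argument the paper relies on: the paper itself gives no proof of this proposition beyond citing B\"ucher--Runkel, and your sketch (freeness of $T(X)$, $T(Y)$ reducing module structures to single morphisms $\eta$, $\tilde\eta$; primitivity of the generators collapsing the Yetter--Drinfeld compatibility to the four-term commutation condition; induction on word length for the converse; matching of morphisms and of the coproduct-induced tensor product with $T(\eta,\zeta)$, $\tilde T(\tilde\eta,\tilde\zeta)$) is precisely the content of that reference. The one step you correctly flag as nontrivial --- the braided-diagram induction showing that the generator-level identity propagates to all of $T(X)\otimes D\otimes T(Y)$ --- is indeed where the real work lies, and deferring it to Refs.\,\citen{Runkel:2010ym,Buecher:2012ma} is consistent with what the paper does.
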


While $\mathrm{YD}_{\mathcal{M},\omega}$ seems more complicated than $\mathcal{M}_\omega$, this description has its advantages. For example, if one can find Hopf ideals $I \subset T(X)$ and $J \subset T(Y)$, one can single out a monoidal subcategory of $\mathrm{YD}_{\mathcal{M},\omega}$ by restricting to Yetter-Drinfeld modules that descend to the quotients $T(X)/I$ and $T(Y)/J$. These subcategories may have better properties than all of $\mathrm{YD}_{\mathcal{M},\omega}$, such as being almost-everywhere braided. See Ref.\ \citen{Buecher:2012ma} for an example of this.

\subsubsection*{Example: Lee-Yang model}

Let $\mathcal{C}=\mathcal{D}$ be the modular fusion category of representations of the irreducible Virasoro VOA at central charge $c=-22/5$. It has simple objects $\mathbf{1}$, $\tau$ with $\tau \otimes \tau = \mathbf{1} \oplus \tau$. Since there is no non-trivial surface defect in the Lee-Yang model, also $\mathcal{M}=\mathcal{C}$. The functors $F,G$ are given by $F: \mathcal{C} \to \mathcal{Z}(\mathcal{C})$, $F(X) = (X,\gamma_{X,-} = c_{X,-})$, where $c_{-,-}$ denotes the braiding in $\mathcal{C}$, and $G(Y) = (Y,\delta_{Y,-} = c^{-1}_{-,Y})$.
The images of $F$ and $G$ are indeed mutually transparent, as the condition \eqref{eq:FG-commute} now simply reads $c^{-1}_{X,Y} \circ c_{X,Y} = \mathrm{id}_{X \otimes Y}$.

Fix $X=Y=\tau$ and let $\omega : \tau \otimes \tau \to \mathbf{1}$ be any non-zero morphism. One can check that for a suitable choice of $b : \tau \otimes \tau \to \tau$, the triple $(\tau,b,b)$ satisfies the commutation condition \eqref{eq:commcond-cat}, i.e.\ $(\tau,b,b) \in \mathcal{C}_\omega$.
Clearly, for any $\lambda \in \mathbb{C}^\times$, also $(\tau,\lambda b, \lambda^{-1} b) \in \mathcal{C}_\omega$.

Since $\tau$ is simple in $\mathcal{C}$, each $(\tau,\lambda b, \lambda^{-1} b)$ is necessarily simple in $\mathcal{C}_\omega$. And since the only morphisms $\tau \to \tau$ in $\mathcal{C}$ are multiples of the identity, it is easy to check that the $(\tau,\lambda b, \lambda^{-1} b)$ are pairwise non-isomorphic for different values of $\lambda$.

What is more, for generic $\lambda,\rho \in \mathbb{C}^\times$ one finds that the tensor product $(\tau,\lambda b, \lambda^{-1} b) \circledast (\tau,\rho b, \rho^{-1} b)$ is again simple in $\mathcal{C}_\omega$, even though the underlying object $\tau \otimes \tau = \mathbf{1}\oplus\tau$ in $\mathcal{C}$ is not simple. For special ratios $\lambda/\rho$, however, the tensor product is the middle term in a non-split short exact sequence (this follows form a computation analogous to the one in Ref.\ \citen{Manolopoulos:2009np}).

This shows that $\mathcal{C}_\omega$ has an uncountably infinite number of simple objects, even though $\mathcal{C}$ only has two, and that $\mathcal{C}_\omega$ is not semisimple (it has non-split exact sequences), even though $\mathcal{C}$ is.

\section*{Acknowledgments}

IR would like to thank the organisers of the conference
``Non-perturbative Methods in QFT'' at Kyushu University (March 10--14, 2025) for a putting together such an inspiring meeting.
The authors would like to thank Anatoly Konechny for pointing out Refs.\,\citen{Chatterjee:1994sv,Gaiotto:2020fdr}.
The authors thank  the  Deutsche Forschungs\-gemeinschaft (DFG, German Research Foundation) under Germany's Excellence Strategy - EXC 2121 ``Quantum Universe'' - 390833306,  and the Collaborative Research Center - SFB 1624 ``Higher structures, moduli spaces and integrability'' - 506632645, for support. GW was, in addition, supported by the STFC under grant ST/T000759/1.

{\small

}

\end{document}